\shorttitle{Percolation of hard disks} % insert short title here for use in running head
\begin{document}

\title{Percolation of hard disks} % insert title - use \\ if it requires more than one line.

\authorone[University of Minnesota]{D. Aristoff} % Affiliation is just the name of your university or institution

\addressone{Department of Mathematics, 127 Vincent Hall, 206 Church St. SE, Minneapolis, MN 55455} % Your postal address goes here.

\begin{abstract}
Random arrangements of points in the plane, interacting only 
through a simple hard core exclusion, are considered. An intensity 
parameter controls the average density of arrangements, in analogy 
with the Poisson point process. It is proved that at high intensity, 
an infinite connected cluster of excluded volume appears almost surely.
\end{abstract}

\keywords{percolation; Poisson point process; Gibbs measure; grand canonical Gibbs distribution; statistical mechanics; hard spheres; hard disks; excluded volume; 
gas/liquid transition; phase transition}

\ams{60K35}{82B43; 82B26} % insert the primary Maths Subject Classification number in the first bracket
         % and the secondary ams number(s) in the second bracket
         % e.g. \ams{60E20}{49G03;49F10}

\section{Introduction}

Consider a random arrangement of points in the plane. 
Suppose that each pair of points at distance less than $L$ 
from one another are joined by an edge, and let $G$ be the 
resulting graph. An important question in percolation theory is: 
Does $G$ have an infinite connected component? 

A key problem in answering this question is in defining 
what is meant by a random arrangement of points. A standard model is the {\it Poisson 
point process}, in which the probability that a (Borel)
set $A$ contains $k$ points of the random arrangement 
is Poisson distributed with parameter $\lambda |A|$, 
where $|\cdot|$ is Lebesgue measure and $\lambda$ is the {\it intensity} 
of the process. Events in disjoint sets are independent \cite{Daley}. 
Here $\lambda$ is the (average) density of 
arrangements of points; it can be shown that if $\lambda$ is 
greater than some critical value $\lambda_c$, then $G$ has an 
infinite connected component with probability one \cite{Meester}. 
(Of course $\lambda_c$ depends on the connection distance $L$.)

The Poisson point process is closely related to the 
(grand canonical) {\it Gibbs distribution} 
of statistical mechanics (with particle interaction set to zero and momentum variables integrated out) 
in the sense that they give nearly identical 
probabilistic descriptions of arrangements of points in large finite subsets of the plane.  
The Gibbs distributions, however, also allow for interactions among the 
points. Suppose the points interact through a simple exclusion of radius $2r >0$. 
(That is, each pair of points is separated by a distance of at least $2r$.) Each 
arrangement of points can then be imagined as a collection of {\it hard core} 
(i.e., nonoverlapping) disks of radius $r$.

There is a Gibbs distribution on arrangements of points with exclusion radius $2r$ in finite subsets 
of the plane which, like the Poisson process, gives equal probabilistic weight to every arrangement of the same density. 
Furthermore a probability measure can be defined 
on such arrangements in the whole plane, such that in a certain sense its restriction to finite 
subsets has the Gibbs distribution. This probability measure, called an (infinite volume)
{\it Gibbs measure}, has been extensively studied (see e.g. \cite{Ruelle1},\cite{Ruelle2},\cite{Georgii}). 

It is natural to ask whether $G$ has an infinite connected component 
when the points in $G$ are sampled from a Gibbs measure with an 
exclusion of radius $2r$. If $r << L$, one can 
argue that the exclusion is insignificant and that, 
by analogy with the Poisson process, there is some critical 
{\it activity}, $z_c$, such that $G$ almost surely has an infinite connected component 
for $z>z_c$. (See Section 7 of \cite{Radin} for a sketch of a proof in 
this direction.) Here the activity $z$ is a parameter analogous to the intensity 
of the Poisson process. 

If $r$ and $L$ are close the qualitative 
relationship with the Poisson point process 
is less clear, at least as it pertains to percolation. In particular, let $L < 4r$. 
Then the percolation question is closely related to {\it excluded volume}. 
(The excluded volume corresponding to an arrangement of points is the set 
of all points which, due to the exclusion radius, cannot be added to the arrangement.) 
If $G$ has an infinite component for such $L$, then there is an infinite 
connected region of excluded volume. The latter event has been associated 
with the gas/liquid phase transition in equilibrium statistical mechanics  
\cite{Kratky},\cite{Woodcock}. Below it is proved that given $L > 3r$, 
with points distributed under a Gibbs measure with an exclusion of radius $2r$, $G$  
has an infinite connected component almost surely whenever the activity $z$ is sufficiently large.

Little is known about qualitative properties of typical samples 
from a Gibbs measure (with exclusion) 
when $z$ is large; even simulations have been inconclusive, although a recent 
large-scale study \cite{Krauth} may settle some questions. It is expected (but not 
proven) that when $z$ is large, typical arrangements exhibit long-range orientational 
order \cite{Krauth}. On the other hand, it has been shown that there can be no long-range 
positional order at any $z$ (see \cite{Richthammer}; this is an extension 
of the famous Mermin-Wagner theorem to the case of hard core interactions). 
The absence of long-range positional order makes the percolation question 
even more pertinent.

\section{Notation, probability measure, and sketch of proof}

Fix $r > 0$, and define 
\begin{equation*}
{\Omega} = \{\omega \subset {\mathbb R}^2\,:\, |x-y|\ge 2r \,\,\forall\,\, x\ne y \in \omega\} \subset {\mathcal P}({\mathbb R}^2).
\end{equation*} 
In particular $\emptyset \in \Omega$. (Here ${\mathcal P}({\mathbb R}^2)$ is the set of subsets of ${\mathbb R}^2$.) 
Let $\mathcal T$ be the topology on $\Omega$ generated by the subbasis of sets of the form 
$\{\omega \in {\Omega}\,:\, \#(\omega \cap U) = \#(\omega\cap K) = m\}$
for compact sets $K \subset {\mathbb R}^2$, 
open sets $U \subset K$, and positive integers $m$. Here $\#\zeta$ is the number of elements in the set 
$\zeta$. Let ${\mathcal F}$ be the $\sigma$-algebra of Borel sets with respect to the topology $\mathcal T$; 
it is known that $\mathcal F$ is generated by sets of the form $\{\omega \in {\Omega}\,:\, \#(\omega \cap B) = m\}$
for bounded Borel sets $B \subset {\mathbb R}^2$ 
and nonnegative integers $m$ \cite{Ruelle2}. 
Let $\Lambda_n = [-n,n]^2\subset{\mathbb R}^2$, and given $A \in {\mathcal F}$, define 
\begin{align*}
&A_{n,N} = \{(x_1,\ldots,x_N)\,:\,\{x_1,\ldots,x_N\} \in A, \, \{x_1,\ldots,x_N\}\subset \Lambda_n\} \subset ({\mathbb R}^2)^N\\
&L_{n,N}(A) = \frac{1}{N!} \int_{A_{n,N}} dx_1\ldots dx_N\\
&L_{n,z}(A) = \sum_{N=1}^\infty z^N\,L_{n,N}(A).
\end{align*} 
For $\zeta \in \Omega$ and $n\in \mathbb N$ define 
\begin{equation*}
 \Omega_{n,\zeta} = \{\omega \in \Omega\,:\, \omega \subset \Lambda_n,\, \omega \cup (\zeta\setminus \Lambda_n) \in \Omega\}.
\end{equation*}
It is easily seen that $\Omega_{n,\zeta} \in \mathcal F$. For $\zeta \in {\Omega}$, $z \in {\mathbb R}$, and $n \in {\mathbb N}$, define the 
{\bf grand canonical Gibbs distribution $G_{n,z,\zeta}$ with boundary condition $\zeta$ on $\Lambda_n$} by  
\begin{equation}\label{gibbsd}
G_{n,z,\zeta}(A) = \frac{L_{n,z}(A \cap \Omega_{n,\zeta})}{L_{n,z}(\Omega_{n,\zeta})}
\end{equation}
for $A \in \mathcal F$. The Gibbs distribution $G_{n,z,\zeta}$ is a probability measure on $({\Omega},{\mathcal F})$ with 
support in $\Omega_{n,\zeta}$. A measure $\mu_z$ on 
$({\Omega},{\mathcal F})$ is called a {\bf Gibbs measure} if $\mu_z({\Omega}) = 1$ and 
for all $n \in {\mathbb N}$ and all measurable functions $f:{\Omega} \to [0,\infty)$, 
\begin{equation}\label{gibbs}
\int_{\Omega} f(\omega)\mu_z(d\omega) = 
\int_{\Omega} \mu_z(d\zeta) \int_{\Omega_{n,\zeta}} G_{n,z,\zeta}(d\omega)\,f(\omega\cup(\zeta\setminus \Lambda_n)).
\end{equation}
It is well known that $\mu_z$ exists for every $z$. 
(For a proof of existence, see 
\cite{Ruelle2}.) However, $\mu_z$ may be non-unique. When 
$\mu_z$ is referred to below, it is assumed $\mu_z$ is an 
arbitrary Gibbs measure, unless otherwise specified.

For $s >0$, $P,Q \subset {\mathbb R}^2$ and $x \in {\mathbb R}^2$, define 
\begin{align*}
&B_s(x) = \{y \in {\mathbb R}^2\,:\,|x-y|\le s\}\\
&d(P,Q) = \inf\{|p-q|\,:\,p \in P,\,q \in Q\} \\
&P - x = \{p-x\,:\,p \in P\}
\end{align*}
and call $P$ {\bf infinite} if for every $n$, $P$ is not a subset of $\Lambda_n$. 

Let $L>3r$. The main result of this paper, Theorem~\ref{theorem3}, 
states that for $z$ sufficiently large, 
$\cup_{x \in \omega} B_{L/2}(x)$ 
has an infinite connected component $\mu_z$-almost surely, for 
all Gibbs measures $\mu_z$.
As a preliminary step the following is shown in 
Theorem~\ref{theorem2}: Let 
$A_{inf}$ be the event that $\cup_{x \in \omega} B_{L/2}(x)$ has an infinite connected 
component, $W$, such that $d(0,W)\le L/2$. 
Then $\lim_{z\to \infty}\mu_z(A_{inf}) = 1$, 
uniformly in all Gibbs measures $\mu_z$.

Here an outline of the proof of Theorem~\ref{theorem2} 
is sketched. Write $R = \delta + 3r/2$ with $\delta>0$, 
with $R$ chosen to be slightly smaller than $L/2$. 
Let $\Psi:{\mathbb R}^2 \to (\epsilon{\mathbb Z})^2$ be a discretization of space,  
with $\epsilon$ much smaller than $r$ and $\delta$. 
Let $\omega \in {\Omega}$, and suppose $\cup_{x \in \omega} B_{R}(\Psi(x))$ has a finite 
connected component $W$. The boundary of $W$ is comprised of a number of closed curves; let $\gamma$ be the one 
which encloses a region $W_\gamma$ containing all the others, 
and assume $\gamma$ is comprised of exactly $K$ arcs.  
Let $A_\gamma$ be the set of all $\omega \in \Omega$ 
for which the curve $\gamma$ arises as above.
It can be shown that there is a vector $u_0 \in {\mathbb R}^2$ of magnitude $\sim r$ and a 
map $\phi\,:\, A_\gamma \to \Omega$ defined by 
$\phi(\omega) = ((\omega \cap W_\gamma) - u_0) \cup (\omega \setminus W_\gamma)$
with the following properties: $L_{n,z}(\phi(A)) = L_{n,z}(A)$ for all 
measurable $A \subset A_\gamma$, and there exist $x_1,x_2,\ldots,x_M \in {\mathbb R}^2$, 
with $M=\lceil cK\rceil$ and $c$ a positive constant (depending only on $\delta$ and $r$, 
and not on $\gamma$), 
such that for all $\omega \in A_\gamma$ and 
$i\ne j \in \{1,2,\ldots,M\}$,  
\begin{equation*}
d(x_i,\phi(\omega)) \ge \delta/2 + 2r \hskip20pt \hbox{and} \hskip20pt |x_i-x_j|  \ge \delta +  2r.
\end{equation*} 
Then with 
$A_\gamma^\phi = \{\phi(\omega) \cup \{y_1,y_2,\ldots,y_M\}\,:\,\omega \in A_\gamma,\, y_i \in B_{\delta/2}(x_i)\}$ 
one can show that
\begin{equation*}
 G_{n,z, \zeta}(A_\gamma) \le \frac{G_{n,z,\zeta}(A_\gamma)}{G_{n,z,\zeta}(A_\gamma^\phi)} = (\pi \delta^2 z/4)^{-M}
\end{equation*}
provided $n$ is large enough. It follows that 
$\mu_z(A_\gamma) \le (\pi \delta^2 z/4)^{-M}$. 
 
Let $A_{inf}^\Psi$ be the event that $\cup_{x \in \omega} B_{R}(\Psi(x))$ 
has an infinite connected component $W$ such that $d(0,W)\le r/2$. 
Consider only those {\it finite} connected components 
$W$ of $\cup_{x \in \omega} B_{R}(\Psi(x))$ 
such that $d(0,W) \le r/2$. A counting argument shows that the 
number of curves $\gamma$ with $K$ arcs corresponding to such $W$ is bounded above by 
\begin{equation*}
\left(\frac{(K+1)H}{\epsilon}\right)^2 \left(\frac{H}{\epsilon}\right)^{2(K-1)}
\end{equation*}
where $H$ depends only on $\delta$ and $r$. 
So the $\mu_z$-probability that there is 
a finite connected component $W$ of $\cup_{x \in \omega} B_{R}(\Psi(x))$ 
such that $d(0,W) \le r/2$ is less than 
\begin{equation*}
\sum_{K=1}^{\infty} \left(\frac{(K+1)H}{\epsilon}\right)^2 \left(\frac{H}{\epsilon}\right)^{2(K-1)} \left(\frac{\pi\delta^2 z}{4}\right)^{-\lceil cK\rceil}.
\end{equation*}
This summation approaches zero as $z\to \infty$. A simpler version of the 
above arguments shows that the $\mu_z$-probability that 
$d(0,W)>r/2$ for {\it all} connected components $W$ of $\cup_{x \in \omega} B_{R}(\Psi(x))$
also approaches zero as $z \to \infty$. It follows that 
$\lim_{z\to \infty}\mu_z(A_{inf}^\Psi) = 1$.
The continuous space corollary is the statement $\lim_{z\to \infty} \mu_z(A_{inf}) = 1$, which 
is deduced by an appropriate choice of $R$; since all of the above estimates 
apply to arbitrary Gibbs measures $\mu_z$, the convergence is uniform in $\mu_z$.

\section{Discretization and contours}

\begin{figure}
\begin{center}
\includegraphics[scale=0.45]{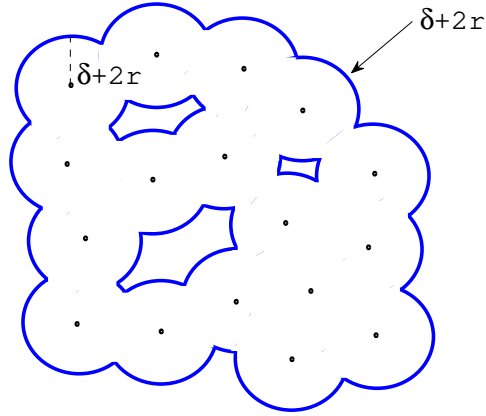}
\end{center}
\vskip-20pt
\caption{The outer curve is a contour $\gamma= \gamma_{\omega,\omega'}$ of size $13$. All the points pictured belong to $\Psi(\omega')$. }\label{figure0}
\end{figure}

Throughout $R$, $\delta$ and $\epsilon$ are fixed with 
$R = \delta+3r/2$, $\delta \in (0,r/2)$ 
and $\epsilon \in (0,\delta/2)$. Define 
$\Psi:{\mathbb R}^2 \to (\epsilon{\mathbb Z})^2$ as follows: 
for $n, m \in {\mathbb Z}$, if 
\begin{equation*}
(x,y) \in [\epsilon m-\epsilon/2,\epsilon m + \epsilon/2)\times 
[\epsilon n -\epsilon/2,\epsilon n + \epsilon/2)
\end{equation*} 
then set 
\begin{equation*}
\Psi(x,y) = (\epsilon m, \epsilon n).
\end{equation*}
Note that $|\Psi(x) -x| < \epsilon$ for all $x \in {\mathbb R}^2$. 
Furthermore $\Psi$ is Borel measurable in the sense that 
$\Psi^{-1}(P)$ is a Borel set for any $P \subset (\epsilon{\mathbb Z})^2$. 
(The dependence of $\Psi$ on $\epsilon$ will be suppressed.)

Let $\omega \in {\Omega}$. The connected components of 
$\cup_{x \in \omega} B_R(\Psi(x))$ naturally partition $\omega$ into 
subsets $\omega' \subset \omega$; each $\omega'$ consists exactly of all the 
points $x \in \omega$ such that $\Psi(x)$ belongs to a given 
connected component of 
$\cup_{x \in \omega} B_R(\Psi(x))$. The subsets $\omega'$ will be called 
{\bf components} of $\omega$. A component $\omega'$ of $\omega$ 
is said to be {\bf finite} if $\omega' \subset \Lambda_n$ for some $n$. For 
each finite component $\omega'$ of $\omega \in \Omega$, consider the set 
$W_{\omega,\omega'} = \cup_{x \in \omega'} B_{\delta+2r}(\Psi(x))$.
Since $\delta + 2r \ge R$, $W_{\omega,\omega'}$ is connected. 
(It will also be assumed throughout 
that $r, \delta \in \mathbb Q$ and that $\epsilon$ 
is transcendental. This assumption implies that 
if two disks in $W_{\omega,\omega'}$ intersect, then they 
overlap.) Consider now the boundary $\partial W_{\omega,\omega'}$ of $W_{\omega,\omega'}$. By the above, 
$\partial W_{\omega,\omega'}$ is a union of (images of) simple closed curves, one of which 
encloses a region containing all the others. Define 
$\gamma = {\gamma}_{\omega,\omega'} \subset {\mathbb R}^2$ to be the latter curve; 
$\gamma$ will be called a {\bf contour} of $\omega$. A contour $\gamma$ is (the image of) a 
simple closed curve comprised of circle arcs. The total 
number of circle arcs in $\gamma$ is called the {\bf size of the contour}. See Figure~\ref{figure0}. 
{\bf The region enclosed by $\gamma$} will be denoted $W_\gamma$. It is 
emphasized that a contour $\gamma= \gamma_{\omega,\omega'}$ is defined only when $\omega'$ is a finite component 
of some $\omega \in \Omega$.

\begin{lemma}
\label{lemma1}
There exists $c >0$ 
such that the following holds. Let $\gamma$ be 
any contour of size $K>0$, and let $A_\gamma$ be the (nonempty) set 
of all $\omega \in {\Omega}$ such that $\gamma = \gamma_{\omega,\omega'}$ for some finite 
component $\omega'$ of $\omega$. Then $A_\gamma \in \mathcal F$. 
Choose $n$ such that $\gamma \subset \Lambda_n$. 
There is a map $\phi: A_\gamma \to {\Omega}$ and $x_1,x_2,\ldots,x_M {\in \mathbb R}^2$, with  
$M = \lceil c K\rceil$, such that: 
\vskip5pt

(i) $L_{n,z}(A) = L_{n,z}(\phi(A))$ for all $z$ and ${\mathcal F}$-measurable $A \subset A_\gamma$

(ii) $|x_i-x_j| \ge \delta + 2r$ for all $i\ne j \in \{1,2,\ldots,M\}$

(iii) $d(x_i, \phi(\omega)) \ge  \delta/2 + 2r$ for all $i \in \{1,2,\ldots,M\}$ and all $\omega \in A_\gamma$.

\end{lemma}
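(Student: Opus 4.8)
The plan is to construct $\phi$ and the $x_i$ essentially as in the outline of Section~2: $\phi$ rigidly translates the part of $\omega$ lying in $W_\gamma$ by a fixed vector $-u_0$ of length of order $r$, and the $x_i$ are fixed points that occupy the room this translation frees up along the part of $\gamma$ that faces $u_0$. Throughout I would use that $W_{\omega,\omega'}$ is the $(r/2)$-neighborhood of $C:=\bigcup_{x\in\omega'}B_R(\Psi(x))$ (since $\delta+2r-R=r/2$), that $C$ is a connected component of $\bigcup_{x\in\omega}B_R(\Psi(x))$, and that distinct components have centers strictly more than $2R$ apart --- here the hypotheses $r,\delta\in\mathbb Q$ and $\epsilon$ transcendental enter, ruling out tangencies of lattice-centered circles of radius $\delta+2r$ and hence degenerate contours. \emph{Measurability} I would get cheaply: $\omega\mapsto\Psi(\omega)$ is $\mathcal F$-measurable, and for each $n$ with $\gamma\subset\Lambda_n$ the event ``$\omega$ has a component $\omega'\subset\Lambda_n$ with $\gamma_{\omega,\omega'}=\gamma$'' depends only on which cells of $(\epsilon\mathbb Z)^2\cap\Lambda_{n+1}$ are occupied by $\omega$: one asks that the occupied cells inside $W_\gamma$ form, under $B_R$-adjacency, a single component whose $B_{\delta+2r}$-union has outer boundary exactly $\gamma$, and that no occupied cell outside it lies within $2R$ of one inside. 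Each such requirement is a countable Boolean combination of the generators $\{\#(\omega\cap B)=m\}$, so $A_\gamma=\bigcup_n(\,\cdot\,)\in\mathcal F$, and it is nonempty since $\gamma$ is a contour by hypothesis.

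Fix $n$ so that $W_\gamma$ and its $|u_0|$-neighborhood lie in $\Lambda_n$, and put $\phi(\omega)=((\omega\cap W_\gamma)-u_0)\cup(\omega\setminus W_\gamma)$. The crucial input is a separation estimate: a point $x$ of $\omega$ inside $W_\gamma$ and a point $y$ of $\omega$ outside $W_\gamma$ satisfy $|x-y|>2R-2\epsilon$, because $\Psi(x)$ and $\Psi(y)$ lie in different components of $\bigcup_\omega B_R(\Psi(\cdot))$, so the segment $[x,y]$ must cross $C$; the same bound, together with a lower bound on the ``thickness'' of $W_{\omega,\omega'}$ near $\gamma$, also controls where the points of $\omega$ sitting in holes of $W_{\omega,\omega'}$ can be relative to everything else. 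Translation preserves the exclusion among the translated points, and $2R-2\epsilon-|u_0|\ge 2r$ for $|u_0|$ in the range I take, so $\phi(\omega)\in\Omega$. Moreover the translated points stay inside $W_\gamma$ and the others stay outside, so $\phi$ is injective with inverse $\psi\mapsto((\psi\cap W_\gamma)+u_0)\cup(\psi\setminus W_\gamma)$; and on each set $A_{n,N}$ the map $\phi$ is a piecewise coordinate translation, hence Lebesgue-measure preserving, giving $L_{n,z}(A)=L_{n,z}(\phi(A))$ for all $z$ and all $\mathcal F$-measurable $A\subset A_\gamma$. This is (i).

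For the direction of $u_0$, an averaging argument over directions shows that for some unit vector $u$ the portion of $\gamma$ whose outward normal makes a small angle with $u$ has length at least a fixed fraction of the perimeter $|\gamma|$; take $u_0$ equal to $u$ scaled to length of order $r$. Along this ``forward'' portion of $\gamma$, the translation by $-u_0$ frees a region, at an appropriate depth near $\gamma$, whose points lie --- uniformly over all $\omega\in A_\gamma$ --- at distance $\ge\delta/2+2r$ from $(\omega\cap W_\gamma)-u_0$ and at distance $\ge\delta/2+2r$ from $\omega\setminus W_\gamma$; I would take the $x_i$ to be a maximal $(\delta+2r)$-separated finite subset of that region, which yields (ii) and (iii) simultaneously (and then, as in the outline, any $y_i\in B_{\delta/2}(x_i)$ may be adjoined to $\phi(\omega)$ without leaving $\Omega$). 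For the count $M=\lceil cK\rceil$ I would use transcendentality of $\epsilon$ once more: the endpoints of the arcs of any contour are intersection points of two lattice-centered circles of the fixed radius $\delta+2r$, so their angular positions form a fixed finite set with no coincidences, whence there is $\ell_0=\ell_0(\delta,r,\epsilon)>0$ with every arc of every contour longer than $\ell_0$; thus $|\gamma|\ge K\ell_0$, the freed region has area bounded below by a fixed multiple of $K$, and a greedy packing produces at least $\lceil cK\rceil$ points with $c=c(\delta,r,\epsilon)>0$.

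The delicate step --- the one I expect to be the real obstacle --- is the uniform geometry behind (iii): the single family $x_1,\dots,x_M$ must stay at distance $\ge\delta/2+2r$ from $\phi(\omega)$ for \emph{every} $\omega\in A_\gamma$, so one must control at the same time the distance from the $x_i$ to the translated inside-points, to the unmoved outside-points (which may be packed just beyond $\gamma$, as close as about $\delta+r$), and to any points of $\omega$ in holes of $W_{\omega,\omega'}$ (which the translation moves as well). Identifying a region that works for all admissible configurations at once --- in particular, showing that such a region is genuinely available along a length $\gtrsim|\gamma|$ of $\gamma$, and pinning $|u_0|$ and the depth of the $x_i$ so that all of these constraints are met simultaneously --- is where the work lies. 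By comparison the measurability of $A_\gamma$ and the measure preservation (i) are routine.
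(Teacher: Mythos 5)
Your plan is the paper's plan --- translate $\omega\cap W_\gamma$ by a vector $-u_0$ of length $\delta/2+r$ pointing in a direction that a positive fraction of $\gamma$ ``faces,'' and place the $x_i$ in the room this frees along that part of $\gamma$ --- and your treatment of measurability and of (i) is essentially what the paper does. But the proposal does not prove the lemma: you explicitly defer the two estimates that constitute its substance. First, property (iii). You say the freed region lies at distance $\ge\delta/2+2r$ from $\phi(\omega)$ uniformly in $\omega\in A_\gamma$ and then concede in your last paragraph that establishing this ``is where the work lies.'' In the paper this is the heart of the proof: the $x_i$ are taken to be $m_i-u_0$ for arc midpoints $m_i$ with outward normal $\theta_a$ in an angular window $I$ of width $\alpha<\delta/(\delta+2r)$; for $x\in\omega\setminus W_\gamma$ one uses that $\Psi(x)\notin B_{2\delta+3r}(x_a)$ (where $x_a$ is the circle center for the arc) to get $|\Psi(x)-(m_a-u_a)|>3\delta/2+2r$, and then $|u_a-u_0|\le(\delta/2+r)\alpha<\delta/2$ to conclude $|x-(m_a-u_0)|>\delta/2+2r$; for $x\in\omega\cap W_\gamma$ the bound is immediate because $|\phi(x)-(m_a-u_0)|=|x-m_a|$ and $d(\Psi(x),\gamma)\ge\delta+2r$. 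None of this is in your writeup, and without it (iii) is an assertion, not a theorem.

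Second, the count $M=\lceil cK\rceil$. Your route goes through a minimum-arc-length lemma ($|\gamma|\ge K\ell_0$ via transcendentality of $\epsilon$) plus an area/greedy-packing bound; the arc-length lemma is itself nontrivial and unproven (distinctness of all pairwise circle-intersection angles over all lattice configurations is an additional claim, not a consequence of the paper's no-tangency assumption). The paper avoids this entirely by pigeonholing on the \emph{number} of arcs rather than on length, and then thinning combinatorially: each center contributes at most $6$ arcs, at most $81$ points of $\omega$ fit in a disk of radius $8r$, so each midpoint has at most $486$ other midpoints within $\delta+2r$, which yields a $(\delta+2r)$-separated subfamily of size $\lceil cK\rceil$ with $c=\alpha/(2\pi\cdot487)$. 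Finally, a smaller flaw: your justification of the inside/outside separation (``the segment $[x,y]$ must cross $C$'') fails for $x$ lying in a hole of $W_{\omega,\omega'}$ --- the segment must cross $W_{\omega,\omega'}$ but can thread the shell $W_{\omega,\omega'}\setminus C$ near a crossing point of two boundary circles without entering $C$. The conclusion survives, but via the paper's separate computation that such $x$ satisfy $d(\Psi(x),\gamma)>\sqrt{5r^2+8r\delta+3\delta^2}>\delta+2r$, which again is absent from your argument.
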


\begin{proof}
To see that $A_\gamma \in \mathcal F$, note that $A_\gamma$ can be written as a finite intersection of sets of the form 
$\{\omega \in \Omega\,:\, \#(\omega \cap\Psi^{-1}(\{x\}))= \ell\}$, where $x \in (\epsilon{\mathbb Z})^2$ 
and $\ell \in \{0,1\}$. 

\begin{figure}
\begin{center}
\includegraphics[scale=0.45]{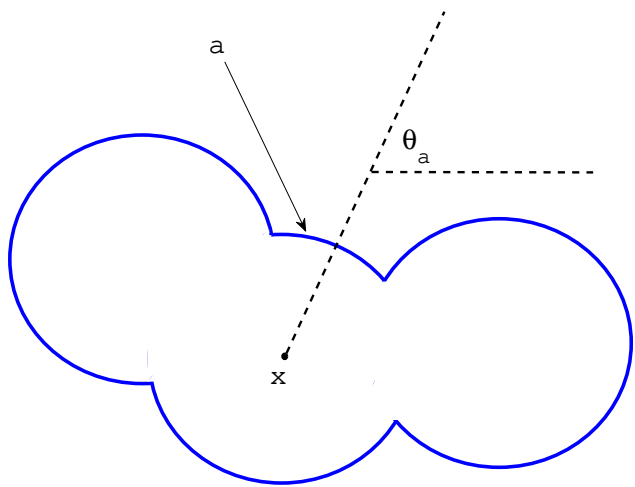}
\end{center}
\caption{A contour $\gamma_{\omega,\omega'}$ with the arc $a$. $\theta_a$ is the 
outward normal angle with respect to the midpoint of $a$. Here $x \in \Psi(\omega')$.}\label{figure1}
\end{figure}

For each circle arc $a$ of $\gamma$, let $\theta_a \in [0,2\pi)$ 
be an outward normal angle with respect to the midpoint of the arc
(see Figure~\ref{figure1}). Choose $0 < \alpha < \delta/(\delta + 2r)$ 
so that $\alpha = 2\pi/n$ for 
some $n\in {\mathbb N}$. By the pigeonhole 
principle, there is a subinterval 
$I = [v, v+\alpha) \subset [0,2\pi)$ 
such that $\lceil (2\pi)^{-1}\alpha K\rceil$ of the angles $\theta_a$ 
belong to $I$. Fix $\theta_0  \in I$ and let 
\begin{equation*}
 u_0 = \left((\delta/2 + r)\cos \theta_0, (\delta/2 + r)\sin \theta_0\right)
\end{equation*}
be the vector in the direction of $\theta_{0}$ with magnitude $\delta/2+r$. Define 
$\phi:{\mathcal P}({\mathbb R}^2) \to {\mathcal P}({\mathbb R}^2)$ by 
\begin{equation*}
\phi(X) = ((X\cap W_\gamma)-u_0) \cup (X\setminus W_\gamma).
\end{equation*}
It will be shown below that $\phi(A_\gamma) \subset \Omega$. 

Let $\omega \in A_\gamma$ be arbitrary, and let $\omega'$ be the unique 
component of $\omega$ such that $\gamma = \gamma_{\omega,\omega'}$. 
Assume $x \in \omega\setminus W_\gamma$. 
Then $d(\Psi(x),\Psi(\omega')) > 2\delta + 3r$, and so 
\begin{equation*}
d(\Psi(x),\cup_{y \in \omega'}B_{\delta + 2r}(\Psi(y)) > \delta + r.
\end{equation*}
It follows that $d(\Psi(x),\gamma) > \delta + r$, so that $d(x,\gamma) >  \delta/2 + r$. 
Now assume $x \in \omega \cap W_\gamma$. If $x \in \omega'$ then  
$d(\Psi(x),\gamma)\ge \delta + 2r$, and so $d(x,\gamma) > \delta/2 + 2r$. 
If $x \notin \omega'$ then 
\begin{equation*}
 \Psi(x) \notin \cup_{y \in \omega'}B_{2\delta + 3r}(\Psi(y))
\end{equation*}
and a simple computation shows $d(\Psi(x),\gamma) > \sqrt{5r^2 + 8r\delta + 3\delta^2} > \delta + 2r$,  
so that $d(x,\gamma) > \delta/2 + 2r$. (See Figure~\ref{figure2a}).

\begin{figure}
\begin{center}
\includegraphics[scale=0.45]{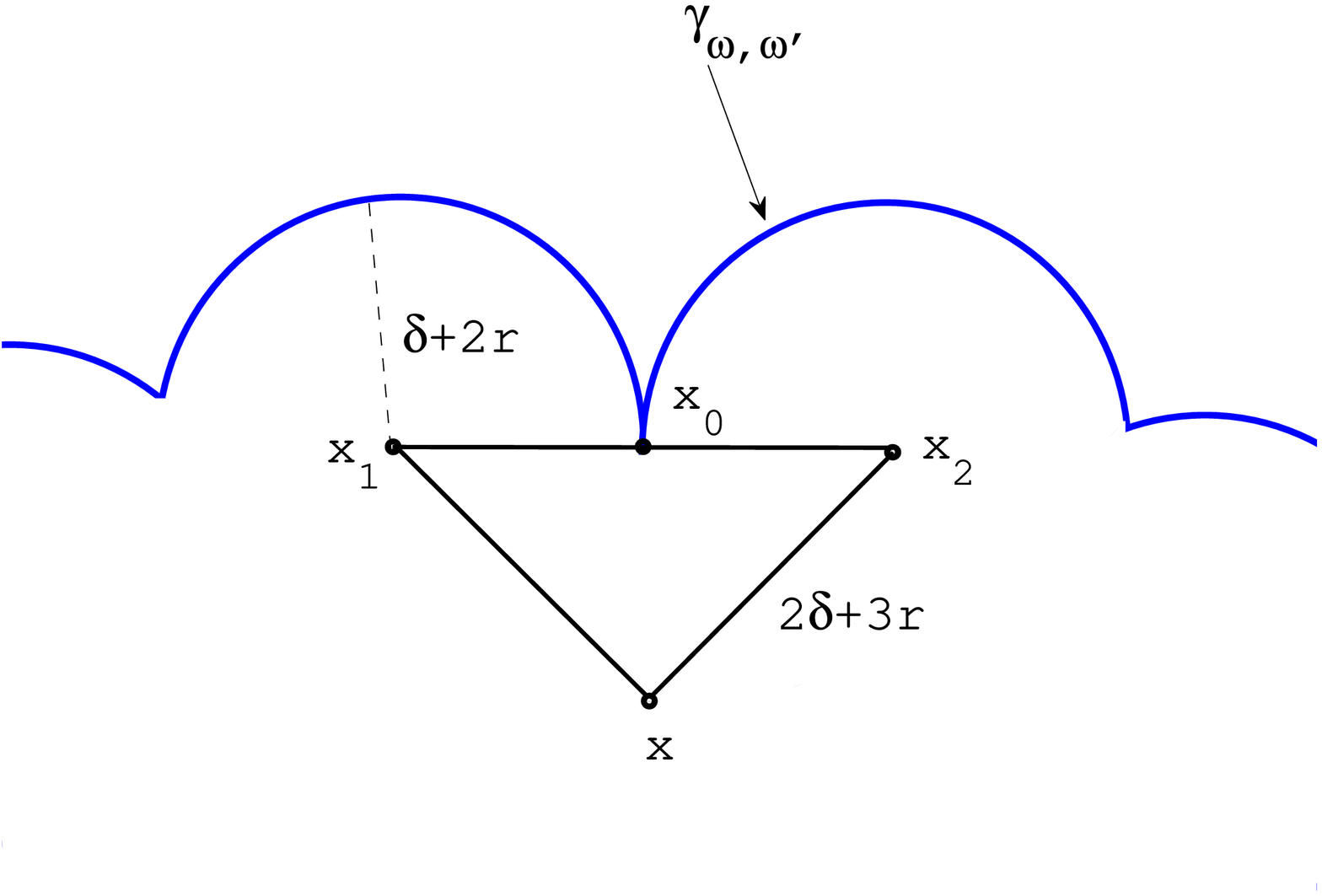}
\end{center}
\vskip-20pt
\caption{Pictured are $x_1,x_2 \in \Psi(\omega')\subset W_\gamma$, and $x \in \Psi(\omega \cap W_\gamma)$ 
but $x \notin \Psi(\omega')$. For such $x$, $d(x, \gamma) > \sqrt{5r^2+8r\delta+3\delta^2}$.  
This can be seen in the above picture, in which the distance from $x$ to $\gamma$ is minimized by placing 
$x_1$ and $x_2$ as far apart as possible.}\label{figure2a}
\end{figure}

Now let $A \subset A_\gamma$ with $A \in {\mathcal F}$, and define 
\begin{align*}
 &A^{in} = \{\omega \cap W_\gamma\,:\, \omega \in A\}  \\
 &A^{out} = \{\omega \setminus W_\gamma\,:\, \omega \in A\}.
\end{align*}
Let $\omega^{in} \in A^{in}$ and $\omega^{out} \in A^{out}$. 
By the preceding paragraph, 
\begin{align*}
&d(\omega^{out},\gamma) > \delta/2+r\\
&d(\omega^{in},\gamma) > \delta/2+2r.
\end{align*}
Let $x \in \omega^{in}$ and $y \in \omega^{out}$, and 
let $z$ be any point on the intersection 
of $\gamma$ with the line segment $\overline{xy}$. Then 
\begin{equation*}
 |x-y| = |x-z| + |y-z| > \delta/2+ 2r + \delta/2 + r = \delta + 3r.
\end{equation*}
As $|u_0| = \delta/2 + r$, it follows that 
\begin{equation*}
 |\phi(x)-\phi(y)| = |(x-u_0)-y| > \delta/2 + 2r.
\end{equation*}
By the preceding statements 
\begin{align*}
 &d(\omega^{in}, \omega^{out}) > \delta + 3r \ge 2r \\
 &d(\phi(\omega^{in}),\phi(\omega^{out})) > \delta/2 + 2r \ge 2r.
\end{align*}
In particular this shows $\phi(A) \subset \Omega$, and so $\phi(A_\gamma) \subset \Omega$. 
Also note that 
$d(\omega^{in},\gamma) > \delta/2 + 2r$ and $\gamma \subset \Lambda_n$ 
together imply $\phi(\omega^{in}) = \omega^{in} - u_0 \subset \Lambda_n$. 
Combining the above statements, 
\begin{align*}
 L_{n,N}(A) &= L_{n,N}(A^{in})\,L_{n,N}(A^{out}) \\
&= L_{n,N}(A^{in}-u_0)\,L_{n,N}(A^{out}) \\
&= L_{n,N}(\phi(A^{in}))\,L_{n,N}(\phi(A^{out})) \\
&= L_{n,N}(\phi(A)).
\end{align*}
Since $\# (\omega \cap \Lambda_n) = \#(\phi(\omega) \cap \Lambda_n)$ for each 
$\omega \in A_\gamma$, it follows that $L_{n,z}(A) = L_{n,z}(\phi(A))$. This proves (i).

\begin{figure}
\begin{center}
\includegraphics[scale=0.45]{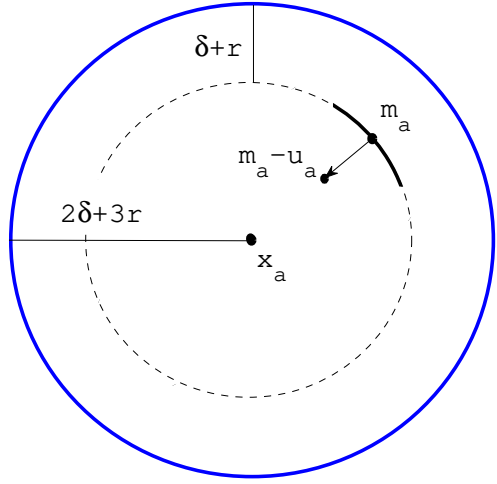}
\end{center}
\caption{The midpoint $m_a$ of the arc $a$ with corresponding normal vector $u_a$. Here $x_a \in \Psi(\omega')$. No points 
in $\Psi(\omega \setminus W_\gamma)$ can be inside the large circle. The magnitude of $u_a$ is $r+\delta/2$, and so 
the $d$-distance between $m_a-u_a$ and 
the large circle is $3\delta/2 + 2r$.}\label{figure2b}
\end{figure}

Consider now (ii) and (iii). Again let $\omega \in A_\gamma$, and let $\omega'$ 
be the unique component of $\omega$ such that $\gamma = \gamma_{\omega,\omega'}$. 
Let $a$ be an arc of $\gamma$ such that $\theta_a \in I$. Let $m_a$ be the 
midpoint of the arc, $x_a$ the center of the circle (of radius $\delta + 2r$) 
which forms the arc, and $u_a$ the vector in the direction of $\theta_a$ with 
magnitude $\delta/2 + r$. 

As $x_a \in \Psi(\omega')$, 
no points of $\Psi(\omega \setminus W_\gamma)$ are in 
$B_{2\delta+3r}(x_a)$. As $|u_a| = \delta/2 + r$, it follows that 
for any $x \in \omega \setminus W_\gamma$, 
$|\Psi(x)-(m_a-u_a)|>3\delta/2 + 2r$. (See Figure~\ref{figure2b}.) 
So for each $x \in \omega \setminus W_\gamma$, 
\begin{align*}
|\Psi(x)-(m_{a}-u_0)| &\ge |\Psi(x)-(m_{a}-u_a)| - |u_a - u_0| \\
&> \frac{3\delta}{2} + 2r - \left(\frac{\delta}{2} + r\right)\alpha\\
&> \delta + 2r,
\end{align*} 
where the last inequality comes by choice of $\alpha$. Therefore 
if $x \in \omega \setminus W_\gamma$ then 
\begin{equation*}
 |\phi(x)-(m_a-u_0)|=|x-(m_a-u_0)| >  \delta/2+2r.
\end{equation*}
On the other hand if $x\in \omega \cap W_\gamma$ then $d(\Psi(x),\gamma) \ge \delta + 2r$, 
and so 
\begin{equation*} 
|\phi(x)-(m_a-u_0)| = |x-m_a| > \delta/2 + 2r.
\end{equation*}
Combining the above statements, if $x \in \omega$ 
then $|\phi(x)-(m_a-u_0)|>\delta/2 + 2r$. 

Now note that for any $x \in \Psi(\omega')$, a disk $B_{2r+\delta}(x)$ contributes to 
no more than $6$ distinct circle arcs in $\gamma$. In turn, each circle arc corresponds 
to a unique $x \in \Psi(\omega')$ which is the center of the circle forming the arc. 
If two arc midpoints in $\gamma$ are at distance less than $\delta+2r$ from one another, 
then the corresponding $x,y \in \Psi(\omega')$ are at distance less than $3\delta + 6r$, 
so that the (unique) points in $\omega'$ which $\Psi$ maps to $x$ and $y$ are at 
distance less than $4\delta + 6r < 8r$ from each other. By 
a simple area comparison, the number of 
points $x \in \omega$ contained in a disk of radius $8r$ is bounded above by $(9r)^2/r^2 = 81$. 
The preceding shows that, given any arc midpoint $m_a$ in $\gamma$, the number of arc 
midpoints $m_{\tilde a} \ne m_a$ in $\gamma$ such that $|m_a-m_{\tilde a}| < \delta + 2r$ 
is bounded above by $J = 6\cdot 81 = 486$. So with $c = (2\pi(J+1))^{-1}\alpha$, 
there exists a subcollection 
\begin{equation*}
 \{m_1,m_2,\ldots,m_M\} \subset \{m_a\,:\, \theta_a \in I\},\hskip10pt M = \lceil cK \rceil
\end{equation*}
of arc midpoints such that $d(m_i, m_j) \ge \delta + 2r$ for all $i \ne j \in \{1,2,\ldots,M\}$. 
By taking $x_i = m_i - u_0$ for $i \in \{1,2,\ldots,M\}$, the proof is completed. 
\end{proof}

\section{Estimates}

Using Lemma~\ref{lemma1}, the $\mu_z$-probability of seeing a given contour $\gamma$ 
is shown to be exponentially small in the size, $K$, of the contour.

\begin{figure}
\begin{center}
\includegraphics[scale=0.45]{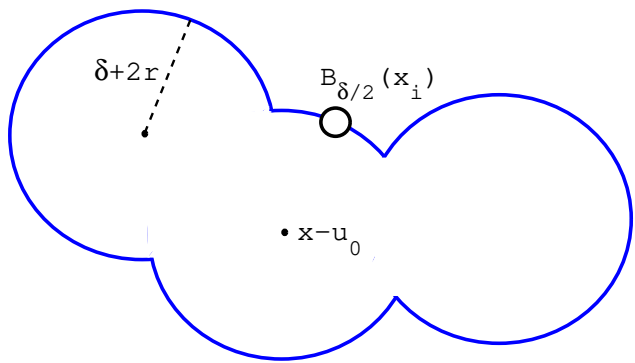}
\end{center}
\caption{A disk $B_{\delta/2}(x_i)$ centered at a midpoint of an arc of $\gamma_{\omega,\omega'} - u_0$, with 
$x \in \Psi(\omega')$.}\label{figure3}
\end{figure}

\begin{lemma}
 \label{lemma3}
There exists $c>0$ such 
that the following holds. Let $\gamma$ be any contour of size $K$, and let 
$A_\gamma$ be the set of all $\omega \in {\Omega}$ such that $\gamma = \gamma_{\omega,\omega'}$ 
for some finite component $\omega'$ of $\omega$. Then for every Gibbs measure $\mu_z$, 
\begin{equation*}
\mu_z(A_\gamma) \le (\pi \delta^2 z/4)^{-\lceil c K\rceil}.
\end{equation*}
\end{lemma}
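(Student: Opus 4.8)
The plan is to combine Lemma~\ref{lemma1} with the defining property \eqref{gibbs} of a Gibbs measure, reducing everything to a bound on the conditional Gibbs distribution $G_{n,z,\zeta}(A_\gamma)$ for $n$ large and arbitrary boundary condition $\zeta$. Fix a contour $\gamma$ of size $K$, take $c$ as in Lemma~\ref{lemma1}, let $M=\lceil cK\rceil$, and let $\phi$ and $x_1,\dots,x_M$ be the map and points supplied by that lemma; choose $n$ with $\gamma\subset\Lambda_n$. The key construction is the ``inflated'' event
\begin{equation*}
A_\gamma^\phi = \{\phi(\omega)\cup\{y_1,\dots,y_M\}\,:\,\omega\in A_\gamma,\ y_i\in B_{\delta/2}(x_i)\},
\end{equation*}
which is $\mathcal F$-measurable and, crucially, is a subset of $\Omega$: by Lemma~\ref{lemma1}(iii) each added point $y_i\in B_{\delta/2}(x_i)$ satisfies $d(y_i,\phi(\omega))>\delta/2+2r-\delta/2=2r$, and by Lemma~\ref{lemma1}(ii) together with $|y_i-x_i|,|y_j-x_j|\le\delta/2$ one gets $|y_i-y_j|>\delta+2r-\delta=2r$, so the hard-core constraint is respected.

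Next I would compare the two events at the level of $L_{n,z}$ restricted to $\Omega_{n,\zeta}$. Because adding the $M$ extra points in the balls $B_{\delta/2}(x_i)$ is an independent enlargement at activity $z$, Tonelli's theorem factorizes the integral: for each $\omega\in A_\gamma$ the fiber over $\phi(\omega)$ in $A_\gamma^\phi$ contributes a factor $z^M \prod_{i=1}^M |B_{\delta/2}(x_i)| = (\pi\delta^2 z/4)^M$. One must check the bookkeeping that $A_\gamma^\phi\cap\Omega_{n,\zeta}$ is exactly the image of $A_\gamma\cap\Omega_{n,\zeta}$ under this enlargement — here the points $x_i$ being arc midpoints of $\gamma-u_0$, hence well inside $\Lambda_n$ once $n$ is large, ensures the new points lie in $\Lambda_n$ and do not interfere with the boundary term $\zeta\setminus\Lambda_n$, and Lemma~\ref{lemma1}(i) handles $\phi$ itself. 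This yields
\begin{equation*}
L_{n,z}(A_\gamma^\phi\cap\Omega_{n,\zeta}) = (\pi\delta^2 z/4)^M\, L_{n,z}(\phi(A_\gamma\cap\Omega_{n,\zeta})) = (\pi\delta^2 z/4)^M\, L_{n,z}(A_\gamma\cap\Omega_{n,\zeta}),
\end{equation*}
so that $G_{n,z,\zeta}(A_\gamma) = (\pi\delta^2 z/4)^{-M}\,G_{n,z,\zeta}(A_\gamma^\phi)\le(\pi\delta^2 z/4)^{-M}$.

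Finally I would pass from the conditional distribution to the Gibbs measure. Applying \eqref{gibbs} with $f=\mathbf 1_{A_\gamma}$ and using that $A_\gamma$ is determined by $\omega$ inside $\Lambda_n$ whenever $\gamma\subset\Lambda_n$ (so that $\mathbf 1_{A_\gamma}(\omega\cup(\zeta\setminus\Lambda_n))=\mathbf 1_{A_\gamma}(\omega)$ for $\omega\in\Omega_{n,\zeta}$ — this needs a small argument that the definition of a contour only sees points whose $\Psi$-images lie in the relevant bounded region, hence within $\Lambda_n$ for $n$ large), one obtains
\begin{equation*}
\mu_z(A_\gamma) = \int_\Omega \mu_z(d\zeta)\, G_{n,z,\zeta}(A_\gamma) \le (\pi\delta^2 z/4)^{-M} = (\pi\delta^2 z/4)^{-\lceil cK\rceil},
\end{equation*}
which is the claim. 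I expect the main obstacle to be the measure-theoretic bookkeeping in the middle step: verifying that the enlargement map is essentially a bijection between $A_\gamma\cap\Omega_{n,\zeta}$ (with extra coordinates in the prescribed balls) and $A_\gamma^\phi\cap\Omega_{n,\zeta}$, that no double-counting occurs in the $1/N!$ symmetrization of $L_{n,N}$, and that the choice of $n$ is genuinely uniform so that the bound survives the integral over $\zeta$. The geometric content is entirely in Lemma~\ref{lemma1}; the remaining work is making the ``add $M$ free points'' heuristic rigorous within the $L_{n,z}$ formalism.
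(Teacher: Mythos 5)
Your proposal is correct and follows essentially the same route as the paper: the inflated event $A_\gamma^\phi$, the verification via Lemma~\ref{lemma1}(ii)--(iii) that it lies in $\Omega$, the factorization $L_{n,z}(A_\gamma^\phi\cap\Omega_{n,\zeta})=(\pi\delta^2 z/4)^M L_{n,z}(A_\gamma\cap\Omega_{n,\zeta})$, the resulting bound on $G_{n,z,\zeta}(A_\gamma)$, and the integration over boundary conditions via \eqref{gibbs}. The bookkeeping points you flag (choosing $n$ a fixed amount larger than the smallest box containing $\gamma$ so that the perturbation cannot interact with $\zeta\setminus\Lambda_n$, and the locality of $\mathbf 1_{A_\gamma}$) are exactly the details the paper handles.
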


\begin{proof}
Choose $c>0$, $\phi$ and $x_1,x_2,\ldots,x_M$ satisfying the conclusion of 
Lemma~\ref{lemma1}. Choose $\hat n$ 
so that $\gamma \subset \Lambda_{\hat n}$, and let $\zeta \in \Omega$ be arbitrary. 
For each $A \subset A_\gamma$ such that $A \in {\mathcal F}$, define 
\begin{equation*}
 {A}^\phi = \{{\omega}^\phi \subset {\mathbb R}^2 \,:\, 
{\omega}^\phi = \phi(\omega) \cup \{y_1,y_2,...,y_M\}, \, \omega \in A,\, y_i \in B_{\delta/2}(x_i)\}
\end{equation*}
(See Figure~\ref{figure3}.) By conditions (ii)-(iii) of Lemma~\ref{lemma1}, 
${A}_\gamma^\phi \subset {\Omega}$, and since $A_\gamma \in \mathcal F$ 
it is easy to see that $A_\gamma^\phi \in {\mathcal F}$.

By definition of $\phi$ and choice of $\hat n$, if $\omega \in A_\gamma$ and 
$\omega^\phi = \phi(\omega) \cup \{y_1,y_2,\ldots,y_M\}$ with $y_i \in B_{\delta/2}(x_i)$, then  
$\omega \setminus \Lambda_{{\hat n}+l} = \omega^\phi \setminus \Lambda_{{\hat n}+l}$, 
where $l = \lceil\delta+r\rceil$. 
Now let $n = {\hat n}+ l + \lceil 2r\rceil$. If $\omega \in A_\gamma$ and 
$\omega^\phi = \phi(\omega) \cup \{y_1,y_2,\ldots,y_M\}$ with $y_i \in B_{\delta/2}(x_i)$, then 
$\omega \in \Omega_{n,\zeta}$ if and only if $\omega^\phi \in \Omega_{n,\zeta}$. 
Let $A_{\gamma,n,\zeta} = A_\gamma \cap \Omega_{n,\zeta}$. 
The preceding shows that $A_{\gamma,n,\zeta}^\phi = A_\gamma^\phi \cap \Omega_{n,\zeta}$. 

Now, since each disk $B_{\delta/2}(x_i)$ has (Lebesgue) area $\pi\delta^2/4$, Lemma~\ref{lemma1} implies  
\begin{align*}
 L_{n,z}(A_{\gamma,n,\zeta}^\phi) &= (\pi \delta z/4)^M L_{n,z}(\phi(A_{\gamma,n,\zeta}))\\
&= (\pi \delta z/4)^M L_{n,z}(A_{\gamma,n,\zeta}).
\end{align*}
From definitions it is easy to see that $G_{n,z,\zeta}(A_\gamma)$ and 
$G_{n,z,\zeta}(A_\gamma^\phi)$ are positive. Thus 
\begin{align*}
{G_{n,z,\zeta}(A_\gamma)} \le \frac{G_{n,z,\zeta}(A_\gamma)}{G_{n,z,\zeta}(A_\gamma^\phi)} 
&=\frac{L_{n,z}(A_\gamma \cap \Omega_{n,\zeta})}{L_{n,z}(A_\gamma^\phi \cap \Omega_{n,\zeta})}\\
&=\frac{L_{n,z}(A_{\gamma, n,\zeta})}{L_{n,z}(A_{\gamma,n,\zeta}^\phi)}\\
&= (\pi \delta^2 z/4)^{-M}.
\end{align*}
Also by choice of $n$, if $\omega \in \Omega_{n,\zeta}$, then 
$\chi_{A_\gamma}(\omega) = \chi_{A_\gamma}(\omega \cup (\zeta\setminus \Lambda_n))$
where $\chi_{A_\gamma}: \Omega \to [0,\infty)$ is the (measurable) function 
$\chi_{A_\gamma}(\omega) = 1$ if $\omega \in A_\gamma$, and 
$\chi_{A_\gamma}(\omega) = 0$ otherwise.
Since $\zeta$ was arbitrary, 
\begin{align*}
 \mu_z(A_\gamma) &= \int_{\Omega} \mu(d\zeta) \int_{\Omega_{n,\zeta}} G_{n,z,\zeta}(d\omega)\,\chi_{A_\gamma}(\omega \cup(\zeta\setminus \Lambda_n)) \\
 &= \int_{\Omega} \mu(d\zeta) \int_{\Omega_{n,\zeta}} G_{n,z,\zeta}(d\omega) \,\chi_{A_\gamma}(\omega) \\
 &= \int_{\Omega} G_{n,z,\zeta}(A_\gamma)\, \mu(d\zeta) \\
 &\le \int_\Omega  (\pi \delta^2 z/4)^{-M}\, \mu(d\zeta) \\
&=  (\pi \delta^2 z/4)^{-M}.
\end{align*}
As $\mu_z$ was an arbitrary Gibbs measure, the proof is complete.
\end{proof}

Next an upper bound for the number of contours enclosing 
the origin is obtained:

\begin{lemma}
 \label{lemma4}
Let $\Gamma_K$ be the set of all contours $\gamma$ of size $K$ 
such that $0 \in W_\gamma$. Then
\begin{equation*}
\#{\Gamma}_K \le \left(\frac{(K+1) H}{\epsilon}\right)^2\left(\frac{H}{\epsilon}\right)^{2(K-1)}
\end{equation*}
where $H$ is a constant depending only on $r$.
\end{lemma}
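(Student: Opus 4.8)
The plan is to build a contour $\gamma \in \Gamma_K$ by specifying, in order, a finite list of lattice points in $(\epsilon\mathbb{Z})^2$ from which the contour can be reconstructed, and then bound the number of such lists. Recall that $\gamma = \gamma_{\omega,\omega'}$ is the outer boundary curve of $W_{\omega,\omega'} = \cup_{x \in \omega'} B_{\delta+2r}(\Psi(x))$, so $\gamma$ is determined entirely by the finite set $\Psi(\omega') \subset (\epsilon\mathbb{Z})^2$ of disk centers that actually contribute an arc to the outer boundary. Each arc of $\gamma$ lies on a circle of fixed radius $\delta + 2r$ centered at one of these lattice points, so $\gamma$ has size $K$ means there are at most $K$ such centers (one center can supply several arcs, but the count of centers is $\le K$, and actually the relevant bound is that $K$ arcs are indexed by $\le K$ centers with repetition). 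I would therefore encode $\gamma$ by the ordered tuple of the centers of the $K$ arcs, read off cyclically around the curve: a list $(y_1, y_2, \ldots, y_K)$ with each $y_i \in (\epsilon\mathbb{Z})^2$. The map from such tuples to contours is injective once we fix a starting convention, so $\#\Gamma_K$ is at most the number of admissible tuples.

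The first genuine constraint is the localization at the origin: since $0 \in W_\gamma$ and $W_\gamma$ is the region enclosed by $\gamma$, the curve $\gamma$ — hence every arc center — must lie within bounded distance of $0$ controlled by $K$. Concretely, the enclosed region $W_\gamma$ is connected and built from $K$ (or fewer) disks of radius $\delta + 2r$; chaining them together, any point of $W_\gamma$, and in particular the centers, lies in $B_{(K+1)(\delta+2r)}(0)$ or so. Hence $y_1$ (say, the center nearest to a canonical reference point, or just any distinguished first center) ranges over $(\epsilon\mathbb{Z})^2 \cap B_{c_1 (K+1)}(0)$ for a constant $c_1$ depending only on $r$ (absorbing $\delta < r/2$), giving at most $(c_1(K+1)/\epsilon)^2$ choices — this is the source of the $((K+1)H/\epsilon)^2$ factor. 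The second constraint is local consecutiveness: consecutive arcs in $\gamma$ meet at a point, so their centers are at distance at most $2(\delta + 2r)$ from one another (two circles of that radius intersect only if their centers are within twice the radius). Thus once $y_{i}$ is fixed, $y_{i+1}$ ranges over $(\epsilon\mathbb{Z})^2 \cap B_{c_2}(y_i)$, a set of size at most $(c_2/\epsilon)^2$ with $c_2$ depending only on $r$. Multiplying, the $K-1$ remaining choices contribute at most $(H/\epsilon)^{2(K-1)}$ with $H$ a constant depending only on $r$ (take $H = \max(c_1, c_2)$). Combining the two bounds yields exactly the claimed estimate.

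I would organize the write-up as follows. First, fix $H$ depending only on $r$ large enough that (a) $W_\gamma \subset B_{(K+1)H}(0)$ for every $\gamma \in \Gamma_K$, which follows from the connectedness of $W_{\omega,\omega'}$ together with $0 \in W_\gamma$ and the observation that $W_\gamma \subset W_{\omega,\omega'}$ is covered by at most $K$ disks of radius $\delta + 2r < 5r/2$; and (b) the centers of any two arcs that share an endpoint on $\gamma$ lie within distance $H$. Then define the encoding: traverse $\gamma$ in a fixed orientation starting from the arc whose center is lexicographically smallest among all arc-centers, record the sequence of arc-centers $(y_1, \ldots, y_K) \in ((\epsilon\mathbb{Z})^2)^K$. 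Argue injectivity of $\gamma \mapsto (y_1,\ldots,y_K)$: from the ordered center list and the radius $\delta + 2r$, the sequence of arcs — hence $\gamma$ — is reconstructed (using the transcendentality of $\epsilon$ to rule out degenerate tangencies, as already assumed in Section 3). Finally count: $y_1 \in (\epsilon\mathbb{Z})^2 \cap B_{(K+1)H}(0)$, so at most $((K+1)H/\epsilon)^2 \cdot (4/\pi)$ — or simply $((K+1)H/\epsilon)^2$ after enlarging $H$ — possibilities; and each subsequent $y_{i+1} \in (\epsilon\mathbb{Z})^2 \cap B_H(y_i)$, so at most $(H/\epsilon)^2$ possibilities each, for $i = 1, \ldots, K-1$. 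The product is the stated bound.

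The main obstacle is the bookkeeping around multiplicities and the precise combinatorial claim linking "size $K$" to "at most $K$ arc-centers in the tuple." One must be careful that a single lattice point $x \in \Psi(\omega')$ can contribute several disjoint arcs to $\gamma$ (the outer boundary can meet the same circle in several arcs), so the tuple $(y_1,\ldots,y_K)$ genuinely has length equal to the number of arcs $K$ and may repeat entries; this is fine for the counting since we bound the number of length-$K$ tuples regardless of repetition, but the injectivity argument must record enough data (e.g. the cyclic order of arcs, which is automatic from the traversal) to pin down which arc of each circle is used. A secondary subtlety is pinning down the constants: one needs the diameter bound on $W_\gamma$ in terms of $K$, which requires knowing the union of $K$ disks of bounded radius that is connected and contains $0$ has diameter $O(K)$ — elementary but worth stating — and absorbing the $\delta$-dependence into $r$ using $\delta \in (0, r/2)$ so that $H$ depends on $r$ alone, as the statement demands.
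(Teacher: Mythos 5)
Your proposal follows essentially the same route as the paper: encode $\gamma$ by the cyclic sequence of arc centers in $(\epsilon\mathbb{Z})^2$, localize the first center in a ball of radius $O\bigl((K+1)r\bigr)$ about the origin using $0 \in W_\gamma$, bound each successive center to a ball of radius $O(r)$ about its predecessor via adjacency of consecutive arcs, and multiply the lattice-point counts. The paper's proof is exactly this (with $H = 5\sqrt{2\pi}\,r$ and the adjacency bound $|x_{i+1}-x_i| < 2\delta + 4r < 5r$), and your extra care about injectivity and repeated centers only tightens a point the paper treats briefly.
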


\begin{proof} 
Note that each contour 
$\gamma$ is completely determined by its set of arcs, with each arc naturally corresponding 
to a unique point in $(\epsilon{\mathbb Z})^2$, namely, the center of the circle of which the arc is part.  
Let $\gamma \in \Gamma_K$. Since $\gamma$ is the (image of a) simple closed curve comprised of circle arcs, there is a sequence 
of circle arcs $a_1,a_2,\ldots,a_K$ such that $a_i$ and $a_{i+1}$ are adjacent for $i=1,2,\ldots, K-1$. 
Choose the corresponding sequence $x_1,x_2,\ldots,x_K$ of points in $(\epsilon{\mathbb Z})^2$. Then $|x_{i+1}-x_i| < 2\delta + 4r < 5r$
for $i=1,2,\ldots,K-1$. 

By a simple area comparison, the number of points in $(\epsilon {\mathbb Z})^2$ inside any disk $B_{s}(x)$ is bounded 
above by 
\begin{equation*}
\frac{\pi(s + \epsilon)^2}{\epsilon^2} < \frac{2\pi s^2}{\epsilon^2}
\end{equation*}
if $s > 3\epsilon$. As $\gamma$ encloses the origin, $x_1$ must be contained 
in a disk of radius $(K+1)5r$ around $0$. 
Therefore there are at most $2\pi [(K+1)5r]^2/\epsilon^2$ 
possibilities for $x_1$. For $i=1,2,\ldots,K-1$, 
$x_{i+1}$ must be contained in a disk of radius $5r$ around $x_i$,  
so given $x_i$ there are no more than 
$2 \pi(5r)^2/\epsilon^2$ possibilities for $x_{i+1}$. Taking 
$H = 5\sqrt{2\pi}r$, the result follows.
\end{proof}

\section{Main results}

Let $\omega \in \Omega$. If the origin is not close to an infinite component of $\omega$, 
then it is either close to a finite component of $\omega$, or it is not close to 
any component of $\omega$. The probability of the former event can be handled by 
combining Lemma~\ref{lemma3} with Lemma~\ref{lemma4}, while it is easy to control  
the probability of the latter event. This leads to the following.

\begin{theorem}
\label{theorem1}
Let $A_{inf}^\Psi$ be the set of all $\omega \in {\Omega}$ such that 
$d(0,\Psi(\omega')) \le \delta + 2r$ for some infinite component $\omega'$ of $\omega$. 
Then $A_{inf}^\Psi \in \mathcal F$, and $\lim_{z\to \infty} \mu_z(A_{inf}^\Psi) =1$, 
uniformly in all Gibbs measures $\mu_z$.
\end{theorem}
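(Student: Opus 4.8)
The plan is to decompose the complement of $A_{inf}^\Psi$ into two events and show that each has $\mu_z$-probability tending to $0$ as $z \to \infty$, uniformly over Gibbs measures $\mu_z$. First I would observe that if $\omega \notin A_{inf}^\Psi$, then every infinite component $\omega'$ of $\omega$ satisfies $d(0,\Psi(\omega')) > \delta + 2r$, so $0 \notin W_{\omega,\omega'}$ for any infinite component. There are then two possibilities. Either (a) the origin lies in $W_{\omega,\omega'}$ for some \emph{finite} component $\omega'$, in which case the contour $\gamma = \gamma_{\omega,\omega'}$ encloses the origin, i.e. $0 \in W_\gamma$; or (b) the origin lies in no $W_{\omega,\omega'}$ at all, finite or infinite, which in particular means that no point $x \in \omega$ has $|\Psi(x)| \le \delta + 2r$, i.e. $\omega \cap \Psi^{-1}(B_{\delta+2r}(0)) = \emptyset$ (up to the minor bookkeeping of which $\Psi$-cells meet that ball).

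For case (a), I would bound the probability by a union over contours enclosing the origin:
\begin{equation*}
\mu_z\bigl(\{\omega : 0 \in W_{\gamma_{\omega,\omega'}} \text{ for some finite component } \omega'\}\bigr) \le \sum_{K=1}^\infty \sum_{\gamma \in \Gamma_K} \mu_z(A_\gamma).
\end{equation*}
Applying Lemma~\ref{lemma4} to bound $\#\Gamma_K$ and Lemma~\ref{lemma3} to bound each $\mu_z(A_\gamma)$, this is at most
\begin{equation*}
\sum_{K=1}^\infty \left(\frac{(K+1)H}{\epsilon}\right)^2 \left(\frac{H}{\epsilon}\right)^{2(K-1)} \left(\frac{\pi \delta^2 z}{4}\right)^{-\lceil cK\rceil},
\end{equation*}
where $c, H$ depend only on $r$ and $\delta$. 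For $z$ large enough that $(\pi\delta^2 z/4)^c > (H/\epsilon)^2$, the geometric-type decay in $K$ dominates the polynomial factor $(K+1)^2$, so the series converges and tends to $0$ as $z \to \infty$; the bound is manifestly uniform in $\mu_z$ since Lemmas~\ref{lemma3} and~\ref{lemma4} are.

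For case (b), the event is contained in $\{\omega : \omega \cap K_0 = \emptyset\}$ for a fixed compact set $K_0$ (a finite union of $\epsilon$-cells covering $B_{\delta+2r}(0)$, of positive area $|K_0| \ge \pi(\delta+2r)^2$, say). I would estimate $\mu_z(\omega \cap K_0 = \emptyset)$ directly from the Gibbs formalism: for any boundary condition $\zeta$ and any sufficiently large $n$, comparing configurations in $\Omega_{n,\zeta}$ that avoid $K_0$ with configurations obtained by adding a single point somewhere in a subregion $K_0'$ of $K_0$ that stays $2r$-separated from everything (this requires $K_0$ large enough to contain a sub-ball of radius $\sim r$ at distance $\ge 2r$ from $\partial K_0$, which forces choosing, if necessary, a slightly larger target set — but since the statement only needs $d(0,\Psi(\omega')) \le \delta + 2r$, a modest enlargement of the radius used here is harmless) gives $G_{n,z,\zeta}(\omega \cap K_0 = \emptyset) \le C/z$ for a constant $C$ depending only on $r, \delta, \epsilon$. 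Integrating over $\zeta$ via \eqref{gibbs} as in the proof of Lemma~\ref{lemma3}, and using that the indicator of $\{\omega \cap K_0 = \emptyset\}$ is insensitive to the boundary values when $n$ is large, yields $\mu_z(\omega \cap K_0 = \emptyset) \le C/z \to 0$, uniformly in $\mu_z$. Finally, measurability: $A_{inf}^\Psi$ is the complement of the union of the case-(a) event (a countable union of the sets $A_\gamma \in \mathcal F$) and the case-(b) event (contained in $\{\omega \cap K_0 = \emptyset\} \in \mathcal F$), and more carefully $A_{inf}^\Psi$ itself can be written in terms of counts of $\omega$ in $\Psi$-cells, so $A_{inf}^\Psi \in \mathcal F$; combining the two bounds gives $\mu_z((A_{inf}^\Psi)^c) \to 0$.

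I expect the main obstacle to be the precise geometric bookkeeping in case (b): verifying that the set whose emptiness we exploit can simultaneously be taken to (i) contain enough room to insert a well-separated extra point so that the $1/z$ comparison goes through, and (ii) be captured by the condition $d(0,\Psi(\omega')) \le \delta + 2r$ in the theorem statement. If the constants don't line up exactly, one repeats the argument with the connection ball radius $R = \delta + 3r/2$ and the cutoff $\delta + 2r$ replaced by slightly larger admissible values — this is the same kind of "choose $R$ slightly smaller than $L/2$" slack invoked in the proof sketch — but it must be checked that the enlargement is consistent with the hypotheses $\delta \in (0, r/2)$ and $\epsilon \in (0,\delta/2)$ fixed earlier. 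Everything else is a routine assembly of Lemmas~\ref{lemma3} and~\ref{lemma4} with a convergent series estimate.
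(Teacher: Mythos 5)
Your proposal is correct and follows essentially the same route as the paper: the complement of $A_{inf}^\Psi$ is split into the event that the origin is enclosed by the contour of a finite component (handled by the union bound over $\Gamma_K$ combining Lemmas~\ref{lemma3} and~\ref{lemma4}) and the event that the origin is far from every component (handled by inserting a single extra point to gain a factor $z^{-1}$). The geometric bookkeeping you flag in case (b) resolves exactly as you hope: $d(0,\Psi(\omega))>\delta+2r$ together with $\epsilon<\delta/2$ gives $d(0,\omega)>\delta/2+2r$, so any point of $B_{\delta/2}(0)$ can be added, yielding the bound $(\pi\delta^2 z/4)^{-1}$.
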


\begin{proof}
Define 
\begin{align*}
  &A_{orig} = \{\omega \in \Omega\,:\, d(0,\Psi(\omega')) > \delta + 2r  \hbox{ for all components }\omega' \hbox{ of }\omega\} \\
  &A_{fin} = \{\omega \in \Omega\,:\, d(0,\Psi(\omega')) \le \delta + 2r \hbox{ for some finite component }\omega' \hbox{ of }\omega\} \\
 &A_{cont} = \{\omega \in \Omega\,:\,0\in W_\gamma \hbox{ for some contour } \gamma = \gamma_{\omega,\omega'}\}.
\end{align*}
Note that $A_{orig}$, $A_{fin}$, and $A_{cont}$ can each be written 
as a countable union of finite intersections of sets of the form 
$\{\omega \in \Omega\,:\, \#(\omega \cap\Psi^{-1}(\{x\}))= \ell\}$ where $x \in (\epsilon{\mathbb Z})^2$ 
and $\ell \in \{0,1\}$. Thus $A_{orig}, A_{fin}, A_{cont} \in{\mathcal F}$. 

Let $A_n$ be the set of all $\omega \in \Omega$ with the following property: 
that there exist a positive integer $k$ and $x_1,x_2,\ldots,x_k \in \Psi(\omega)$ such that $|x_1|\le \delta + 2r$, 
$|x_i-x_{i+1}| \le 2R$ for $i=1,2,\ldots,k-1$, and $x_k \notin \Lambda_n$. 
Note that $A_n$ can be written as a finite union of finite intersections of sets of the form 
$\{\omega \in \Omega\,:\, \#(\omega \cap\Psi^{-1}(\{x\}))= 1\}$ where $x \in (\epsilon{\mathbb Z})^2$. 
So $A_n \in \mathcal F$. Since $A_{inf}^\Psi = \cap_{n=1}^\infty A_n$ 
it follows that $A_{inf}^\Psi \in \mathcal F$.

Note that $\Omega \setminus A_{inf}^\Psi \subset A_{orig}\cup A_{fin}$
and $A_{fin} \subset A_{cont}$, so 
\begin{align*}
 \mu_z(\Omega \setminus A_{inf}^\Psi) &\le \mu_z(A_{orig}) + \mu_z(A_{fin}) \\
&\le \mu_z(A_{orig}) + \mu_z(A_{cont}).
\end{align*}
Choose $c>0$ such that the conclusion of Lemma~\ref{lemma3} holds, 
and choose $H$ such that the conclusion of Lemma~\ref{lemma4} holds. Then 
for any Gibbs measure $\mu_z$, 
\begin{align*}
\mu_z(A_{cont}) &\le \sum_{K=1}^\infty \# \Gamma_K \,(\pi \delta^2 z/4)^{-\lceil c K\rceil} \\
&\le \sum_{K=1}^\infty \left(\frac{(K+1) H}{\epsilon}\right)^2\left(\frac{H}{\epsilon}\right)^{2(K-1)} \left(\frac{\pi \delta^2 z}{4}\right)^{-\lceil c K\rceil}.
\end{align*}
This shows that $\mu_z(A_{cont})\to 0$ as $z\to \infty$, uniformly in $\mu_z$.

Now for any $\omega \in A_{orig}$, $d(0,\Psi(\omega))> \delta + 2r$, and so 
$d(0,\omega) > \delta/2 + 2r$. 
It follows that for any $\omega \in A_{orig}$ and any $x \in B_{\delta/2}(0)$, $\omega \cup x \in \Omega$. 
A simplified version of the proof of Lemma~\ref{lemma3} then implies that 
$\mu_z(A_{orig}) \le (\pi \delta^2 z/4)^{-1}$ for any Gibbs measure $\mu_z$. 
Thus $\mu_z(A_{orig})\to 0$ as $z\to \infty$ uniformly in $\mu_z$, and 
the result follows.
\end{proof}

Below Theorem~\ref{theorem1} is extended to continuous space:

\begin{theorem}
 \label{theorem2}
Let $L > 3r$. Let $A_{inf}$ be the set of all $\omega \in \Omega$ such that $\cup_{x \in \omega}B_{L/2}(x)$ 
has an infinite connected component, $W$, with $d(0,W) \le L/2$. Then $A_{inf} \in \mathcal F$,  
and $\lim_{z\to \infty} \mu_z(A_{inf}) = 1$, uniformly in all Gibbs measures $\mu_z$.
\end{theorem}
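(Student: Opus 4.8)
The plan is to deduce Theorem~\ref{theorem2} from Theorem~\ref{theorem1} by committing the still-free parameters $\delta$ and $\epsilon$ to a sufficiently small value (this is the ``appropriate choice of $R$'' alluded to in the sketch of Section~2). Recall $R=\delta+3r/2$ with $\delta\in(0,r/2)\cap\mathbb{Q}$ and $\epsilon\in(0,\delta/2)$ transcendental. Since $L>3r$ we have $(L-3r)/3>0$, so we may fix such a $\delta$ with in addition $\delta<(L-3r)/3$; then, using $\epsilon<\delta/2$,
\begin{equation*}
R+\epsilon=\delta+\frac{3r}{2}+\epsilon<\frac{3\delta}{2}+\frac{3r}{2}<\frac{L-3r}{2}+\frac{3r}{2}=\frac{L}{2}.
\end{equation*}
All constructions and estimates of Sections~3--5, and Theorem~\ref{theorem1} in particular, hold verbatim for this choice. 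The claim is that with these parameters $A_{inf}^\Psi\subset A_{inf}$, whence $\mu_z(A_{inf})\ge\mu_z(A_{inf}^\Psi)\to 1$ as $z\to\infty$, uniformly in all Gibbs measures $\mu_z$, by Theorem~\ref{theorem1}.

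To prove the inclusion, let $\omega\in A_{inf}^\Psi$ and choose an infinite component $\omega'$ of $\omega$ with $d(0,\Psi(\omega'))\le\delta+2r$. Let $C=\cup_{x\in\omega'}B_R(\Psi(x))$ be the connected component of $\cup_{x\in\omega}B_R(\Psi(x))$ associated to $\omega'$; it is connected by construction, and it is unbounded because $\omega'$ is not contained in any $\Lambda_n$ while $|\Psi(x)-x|<\epsilon$ for every $x$. Since $R+\epsilon\le L/2$ and $|\Psi(x)-x|<\epsilon$, one has $B_R(\Psi(x))\subset B_{L/2}(x)$ for every $x\in\mathbb{R}^2$, so $C\subset\cup_{x\in\omega'}B_{L/2}(x)\subset\cup_{x\in\omega}B_{L/2}(x)$. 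A connected subset lies in a single connected component, so $C$ is contained in some connected component $W$ of $\cup_{x\in\omega}B_{L/2}(x)$, and $W\supset C$ is infinite. Finally, picking $x\in\omega'$ with $|\Psi(x)|\le\delta+2r$ gives $|x|<\delta+2r+\epsilon<L$ (using $\delta<r/2$, $\epsilon<r/4$, $L>3r$), and since $B_{L/2}(x)\subset C\subset W$ we obtain $d(0,W)\le d(0,B_{L/2}(x))=\max\{0,|x|-L/2\}\le L/2$. Hence $\omega\in A_{inf}$.

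It remains to verify $A_{inf}\in\mathcal{F}$. Write $A_{inf}=\bigcap_{n=1}^\infty D_n$, where $D_n$ is the set of $\omega\in\Omega$ admitting $x_1,\ldots,x_k\in\omega$ with $|x_1|\le L$, $|x_i-x_{i+1}|\le L$ for $1\le i<k$, and $x_k\notin\Lambda_n$. Because every $\omega\in\Omega$ is locally finite (the hard core condition yields a uniform packing bound in each ball), such a chain may be taken self-avoiding and contained in $\Lambda_m$ with $m=n+\lceil L\rceil$, so membership in $D_n$ depends only on $\omega\cap\Lambda_m$, a measurable function of $\omega$ into the space of finite configurations in $\Lambda_m$; decomposing the latter space by the (finite) number of points, ``$D_n$ holds'' becomes on each piece a finite union of Borel conditions on finitely many points of $\Lambda_m$, so $D_n\in\mathcal{F}$ and hence $A_{inf}\in\mathcal{F}$. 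I expect this measurability bookkeeping to be the only real nuisance: the probabilistic content is carried entirely by Theorem~\ref{theorem1}, and once $\delta,\epsilon$ are chosen so that $R+\epsilon\le L/2$ the inclusion $A_{inf}^\Psi\subset A_{inf}$ is routine.
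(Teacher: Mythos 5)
Your proposal is correct and follows the same route as the paper: fix $\delta,\epsilon$ so that $2\delta+3r+2\epsilon<L$ (equivalently $R+\epsilon<L/2$), establish the inclusion $A_{inf}^\Psi\subset A_{inf}$, and invoke Theorem~\ref{theorem1}; you additionally spell out the inclusion and the measurability, both of which the paper leaves to the reader. One cosmetic slip: $B_{L/2}(x)\subset C$ is false since $C$ is a union of $R$-balls with $R<L/2$, but $B_{L/2}(x)$ is connected, lies in $\cup_{y\in\omega}B_{L/2}(y)$, and meets $C$, so $B_{L/2}(x)\subset W$ still holds and your conclusion is unaffected.
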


\begin{proof}
It is standard to show $A_{inf} \in \mathcal F$, so this 
part of the proof is omitted. To see that $\lim_{z\to \infty} \mu_z(A_{inf}) = 1$, 
choose $\delta \in (0,r/2)$ and $\epsilon \in (0,\delta/2)$ such that $3r+2\delta + 2\epsilon < L$, and 
define $A_{inf}^\Psi$ as in Theorem~\ref{theorem1}. 
Then $A_{inf}^\Psi \subset A_{inf}$ and so $\mu_z(A_{inf}) \ge \mu_z(A_{inf}^\Psi)$. 
The result now follows from Theorem~\ref{theorem1}.
\end{proof}

The main result can now be proved:

\begin{theorem}
 \label{theorem3}
Let $L > 3r$. Let $A$ be the set of all $\omega \in \Omega$ such that $\cup_{x \in \omega}B_{L/2}(x)$ 
has an infinite connected component. Then $A \in \mathcal F$, and for $z$ 
sufficiently large, $\mu_z(A) = 1$ for all Gibbs measures $\mu_z$.
\end{theorem}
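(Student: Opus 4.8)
The plan is to deduce Theorem~\ref{theorem3} from Theorem~\ref{theorem2} by exploiting the fact that $A$ is a \emph{tail event}: whether $\cup_{x\in\omega}B_{L/2}(x)$ has an infinite connected component does not depend on the points of $\omega$ lying in any fixed bounded box. Having this, one conditions a putative ``bad'' Gibbs measure on the event $A^c$ and checks that the result is again a Gibbs measure, to which the uniform estimate of Theorem~\ref{theorem2} applies and yields a contradiction. Measurability of $A$ is established just as the measurability of $A_{inf}$ in Theorem~\ref{theorem2}; alternatively, $A = \cup_{y\in\mathbb Q^2}A(y)$, where $A(y)$ is the event that $\cup_{x\in\omega}B_{L/2}(x)$ has an infinite connected component $W$ with $d(y,W)\le L/2$, each $A(y)\in\mathcal F$ by the argument used for $A_{inf}^\Psi$ in Theorem~\ref{theorem1}, and the union is countable.

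First I would record the tail property in the concrete form
\begin{equation*}
\chi_A(\omega) = \chi_A(\omega\setminus\Lambda_n) \hskip20pt \hbox{for all } \omega\in\Omega \hbox{ and } n\in\mathbb N,
\end{equation*}
where $\chi_A$ is the indicator of $A$. Let $G_\omega$ be the graph with vertex set $\omega$ and an edge between $x,y\in\omega$ whenever $|x-y|\le L$; since $\omega$, being a hard core configuration, is locally finite, one checks readily (a locally finite union of the closed disks $B_{L/2}(x)$ is closed) that $\cup_{x\in\omega}B_{L/2}(x)$ has an infinite connected component if and only if $G_\omega$ does. Adding vertices to a graph only merges components, so $\omega\setminus\Lambda_n\in A$ implies $\omega\in A$. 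Conversely, if $G_\omega$ has an infinite component $C$, then $C$ is infinite, connected and locally finite, so by K\"onig's infinity lemma it contains a ray $x_0\sim x_1\sim x_2\sim\cdots$ of distinct vertices; only finitely many of the $x_i$ lie in $\Lambda_n$ (again by the hard core condition, $\omega\cap\Lambda_n$ is a finite set), so for some $k$ the tail $x_k\sim x_{k+1}\sim\cdots$ is a ray in $G_{\omega\setminus\Lambda_n}$, whence $\omega\setminus\Lambda_n\in A$. This proves the displayed identity, and the same identity holds for $\chi_{A^c}=1-\chi_A$; thus $A$ and $A^c$ are tail events.

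Next, suppose for contradiction that some Gibbs measure $\mu_z$ has $\mu_z(A)<1$, and set $\nu = \mu_z(\cdot\mid A^c) = \mu_z(\cdot\cap A^c)/\mu_z(A^c)$, a probability measure on $(\Omega,\mathcal F)$ with $\nu(\Omega)=1$ (using $\mu_z(\Omega)=1$). For any $n\in\mathbb N$ and measurable $f:\Omega\to[0,\infty)$, apply (\ref{gibbs}) to $f\chi_{A^c}$; since $\omega\subset\Lambda_n$ and $A^c$ is a tail event, $\chi_{A^c}(\omega\cup(\zeta\setminus\Lambda_n)) = \chi_{A^c}(\zeta)$, so the right side factors and, after dividing by $\mu_z(A^c)>0$, one obtains
\begin{equation*}
\int_\Omega f\,d\nu = \int_\Omega \nu(d\zeta) \int_{\Omega_{n,\zeta}} G_{n,z,\zeta}(d\omega)\,f(\omega\cup(\zeta\setminus\Lambda_n)),
\end{equation*}
i.e.\ $\nu$ satisfies (\ref{gibbs}) and hence is a Gibbs measure for the activity $z$. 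Now choose $z_0$ so that $\mu(A_{inf})>0$ for \emph{every} Gibbs measure $\mu$ whenever $z>z_0$ --- such $z_0$ exists by the uniformity in Theorem~\ref{theorem2} (take $\eta=1$ in the definition of the uniform limit). Fixing $z>z_0$, we get $\nu(A_{inf})>0$; but $A_{inf}\subset A$ gives $\nu(A_{inf})\le\nu(A) = \mu_z(A\cap A^c)/\mu_z(A^c) = 0$, a contradiction. Hence $\mu_z(A)=1$ for every Gibbs measure $\mu_z$ as soon as $z>z_0$, which is the assertion.

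The main obstacle I anticipate is the tail property: one must use that a hard core configuration is locally finite, so that $G_\omega$ is locally finite and an infinite component contains a ray which necessarily exits every box $\Lambda_n$ --- it is precisely this that lets one delete $\omega\cap\Lambda_n$ without destroying the infinite component. The passage between the geometric picture $\cup_{x\in\omega}B_{L/2}(x)$ and the graph $G_\omega$ is routine but also relies on local finiteness. By contrast, the step showing that conditioning a Gibbs measure on a tail event produces a Gibbs measure is a one-line consequence of (\ref{gibbs}) together with this tail property, and so presents no difficulty.
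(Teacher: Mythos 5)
Your proof is correct, but it reaches the conclusion by a genuinely different route in the key final step. The paper also observes that $A$ is a tail event, but then cites two pieces of external machinery: the triviality of the tail $\sigma$-algebra under \emph{extremal} Gibbs measures (so $\mu_z(A)\in\{0,1\}$ for those), combined with Theorem~\ref{theorem2} to rule out the value $0$ for large $z$, and finally the extremal decomposition theorem to pass from extremal Gibbs measures to arbitrary ones. You instead run a self-contained contradiction: if $\mu_z(A)<1$, condition on $A^c$ and check directly from the DLR equation (\ref{gibbs}) that $\mu_z(\cdot\mid A^c)$ is again a Gibbs measure at activity $z$ (your computation $\chi_{A^c}(\omega\cup(\zeta\setminus\Lambda_n))=\chi_{A^c}(\zeta)$ for $\omega\subset\Lambda_n$ is exactly right and is where the tail property enters), then apply the uniform lower bound of Theorem~\ref{theorem2} to this new Gibbs measure to get $\nu(A_{inf})>0$, contradicting $\nu(A)=0$. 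This is essentially the standard proof of the tail zero--one law unrolled by hand in the one case needed; it buys you independence from the citations to Georgii's Theorems 7.7 and 7.26, at the cost of a slightly longer argument. Both proofs rely on the same two ingredients --- the tail property of $A$ and the uniformity over all Gibbs measures in Theorem~\ref{theorem2} --- and your use of the latter (applying it to the conditioned measure) parallels the paper's use of it on each extremal component. A bonus of your write-up is that you actually prove the tail property (via local finiteness of hard-core configurations and K\"onig's lemma), which the paper asserts as clear; that verification is sound.
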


\begin{proof}
Proof of measurability is again omitted. 
It is clear that $A$ is in the tail sub-$\sigma$-algebra of $\mathcal F$, 
so $\mu_z(A) = 0$ or $1$ for all extremal Gibbs measures $\mu_z$
(see \cite{Georgii2}, Chapter 7, Theorem 7.7). 
Let $A_{inf}$ be defined as in Theorem~\ref{theorem2}. Since $A_{inf} \subset A$,  
Theorem~\ref{theorem2} implies that $\lim_{z \to \infty} \mu_z(A) = 1$ uniformly in 
all Gibbs measures $\mu_z$. So for $z$ sufficiently large, 
$\mu_z(A) = 1$ for all extremal Gibbs measures $\mu_z$. 
The result now follows from extremal decomposition of Gibbs measures 
(see \cite{Georgii2}, Chapter 7, Theorem 7.26).
\end{proof}

\section{Conclusion}

Percolation of excluded volume has been proved for points in the plane distributed 
according to Gibbs measures with a pure hard core interaction. This model, 
commonly called the {\it hard disk model}, is among the simplest continuum models of particles 
with pair interactions. The proof, which generalizes to 3D, relies on a Peierls-type argument \cite{Griffiths}. 
(The generalization requires a slightly more complicated argument for choosing $u_0$ 
and estimating the number of contours of a given size.) 
A similar result is expected in a hard disk model with an added attraction which 
extends beyond the hard core, though this generalization is not pursued here. 
The hard disk model with attraction is believed to exhibit a gas-liquid phase transition, which 
has been heuristically connected to percolation of excluded volume \cite{Kratky},\cite{Woodcock}. 
(There is no proof in the literature of a gas-liquid transition 
in a continuum model with pair interactions; see, however, \cite{Lebowitz}.) 
To this author's knowledge, there is no previous proof of percolation of 
excluded volume for hard disks 
(or spheres) in the literature. (See \cite{Radin} for a proof in a model with a complicated exclusion.) 
In general, very little is known (or proved) about the qualitative properties 
of the hard disk model at large activity. The result of this paper is of particular interest 
because of the known absence of long range translational order in the model. It remains an open 
question whether percolation occurs for an arbitrarily small connection radius, that is, 
for a connection radius extending just beyond the exclusion radius \cite{Radin}. 

\section{Acknowledgement}

The author wishes to thank C. Radin for useful discussions.

\end{document}